\documentclass[11pt]{article} \textwidth 6.5in\textheight 51pc

\usepackage{amsthm}
\usepackage{amsmath}
\usepackage{amssymb}
\usepackage{hyperref}
\usepackage{natbib}
\usepackage{graphicx}
\usepackage{caption}
\usepackage{verbatim}

\topmargin -3pc\oddsidemargin 0in\evensidemargin 0in

\newtheorem{lemma}{Lemma}[section]
\newtheorem{theorem}{Theorem}
\newtheorem{proposition}[lemma]{Proposition}

\theoremstyle{remark}

\DeclareMathOperator{\Tr}{Tr}

\newcommand{\lea}{<^+}
\newcommand{\gea}{>^+}
\newcommand{\eqa}{=^+}

\newcommand{\lel}{<^{\log}}
\newcommand{\gel}{>^{\log}}
\newcommand{\eql}{=^{\log}}

\newcommand{\lem}{\stackrel{\ast}{<}}
\newcommand{\gem}{\stackrel{\ast}{>}}
\newcommand{\eqm}{\stackrel{\ast}{=}}

\newcommand{\tr}{\operatorname{Tr}}

\newcommand{\bmu}{\boldsymbol{\mu}}

\newcommand{\m}{\mathbf{m}}

\newcommand{\QC}{\text{\rmfamily\mdseries\upshape QC}}

\newcommand{\Hg}{\mathbf {Hg}}

\newcommand{\Hv}{\mathbf {Hv}}

\newcommand{\I}{\mathbf I}

\newcommand\floor[1]{{\lfloor#1\rfloor}}\newcommand\ceil[1]{{\lceil#1\rceil}}
\newcommand\FS{\{0,1\}^*}
\newcommand\IS{\{0,1\}^\infty}
\newcommand\FIS{\{0,1\}^{*\infty}}
\newcommand\R{\mathbb{R}}
\newcommand\Q{\mathbb{Q}}
\newcommand\N{\mathbb{N}}
\newcommand\W{\mathbb{W}}

\newcommand\BT{\Sigma}
\renewcommand{\i}{{\mathrm{i}}}

\newcommand\K{{\mathbf K}}
\newcommand\C{{\mathbb C}}
\renewcommand\d{{\mathbf d}}

\newcommand\Ks{\chi} 
\newcommand\ch{{\mathcal H}}

\newcommand\bra[1]{{\langle #1|}}
\newcommand\ket[1]{{| #1 \rangle}}
\newcommand\braket[2]{{\langle #1 | #2 \rangle}}

\newcommand{\kpsi}{\ket{\psi}}
\newcommand{\ktheta}{\ket{\theta}}
\newcommand{\dom}{\mathrm{Dom}}
\newcommand{\supp}{\mathrm{Supp}}

\begin{document}
	
\author {{\sc Samuel Epstein}\footnote{JP Theory Group. \href{mailto:samepst@jptheorygroup.org}{samepst@jptheorygroup.org}
}}
\title{ \textbf{An Extended Coding Theorem with Application to Quantum Complexities}}\date{\today}\maketitle

\begin{abstract}
	This paper introduces a new inequality in algorithmic information theory that can be seen as an extended coding theorem. This inequality has applications in new bounds between quantum complexity measures.
\end{abstract}
\section{Introduction}

In \citealt*{EpsteinLe14}, a new inequality in the field of algorithmic information theory was proven. For a finite set of natural numbers $D$, it was shown that the size of the smallest description of an element of $D$, $\min_{x\in D}\K(x)$, is not much smaller than the negative logarithm of the algorithmic probability of the set, $-\log\sum_{x\in D}\m(x)$. This inequality holds for non-exotic sets whose encoding has little mutual information with the halting sequence, $\I(D;\mathcal{H})=\K(D)-\K(D|\mathcal{H})$.
\begin{align*}
\min_{x\in D}\K(x) &\lel -\log\sum_{x\in D}\m(x) + \I(D;\mathcal{H}).
\end{align*}
Due to algorithmic conservation laws, there are no algorithmic means to produce sets with arbitrary high mutual information with the halting sequence. In this paper, we introduce an update on the above inequality, proving for non-exotic maps $f$ between whole numbers with a finite domain, $\min_{x\in\mathrm{Dom}(f)}\K(x)+f(x)$ is close to the amount $-\log \sum_{x\in\mathrm{Dom}(f)}\m(x)2^{-f(x)}$. Exotic maps $f$ have encodings with high mutual information with the halting sequence, $\I(f;\mathcal{H})$, with
\begin{align*}
\min_{x\in \mathrm{Dom}(f)}\K(x) +f(x)&\lel -\log\sum_{x\in \mathrm{Dom}(f)}\m(x)2^{-f(x)} + \I(f;\mathcal{H}).
\end{align*}
The above inequality can be seen as an extended coding theorem. It has applications in proving tighter bounds between algorithmic quantum complexities. In particular, we show that for a non-exotic pure quantum state $\ket{\psi}$, its Vit\'{a}nyi complexity, $\Hv$, is smaller than its G\'{a}cs complexity, $\Hg$, with
\begin{align*}
\Hv(\ket{\psi}) &\lel \Hg(\ket{\psi}) + \I(\ket{\psi}:\mathcal{H})
\end{align*}
The term $\I(\ket{\psi}:\mathcal{H})$ represents the amount of mutual information between two infinite sequences; one infinite sequence is the halting problem, the other infinite sequence is an infinite encoding of the quantum state $\ket{\psi}$.

Finally this paper contains a section relating quantum Kolmogorov complexity and G\'{a}cs entropy, resolving, in part, open issue (1) of \cite{Gacs01}.
\section{Related Work}

The study of Kolmogorov complexity originated from the work of~[\citealt*{Kolmogorov65}]. The canonical self-delimiting form of Kolmogorov complexity was introduced in~[\citealt*{Levin74}] and~[\citealt*{Chaitin75}]. The universal probability $\m$ was introduced in~[\citealt*{Solomonoff64}]. More information about the history of the concepts used in this paper can be found in the textbook~[\citealt*{LiVi08}]. The main theorem of this paper is an inequality that has the mutual information of a string with the halting sequence. More background on this term can be found in~[\citealt*{VereshchaginVi04v2}]. Lemma \ref{lem:elemmap}, uses the notion of stochasticity. A string is stochastic if it is typical of a simple probability distribution. Aspects involving stochastic objects were studied in~[\citealt*{Shen83,Shen99,Vyugin87}]. Stochasticity is one area of study in algortihmic statistics, which can be found in~[\citealt*{VereshchaginVi04,VereshchaginVi10,Vereshchagin13,VereshchaginSh16}].  
\section{Conventions}
\label{sec:conv}
We use $\R$, $\N$, $\W$, $\C$, $\FS$, and $\IS$ to represent reals, natural numbers, whole numbers, complex numbers, finite strings, and infinite strings. Let $X_{\geq 0}$ and $X_{>0}$ be the sets of non-negative and of positive elements of $X$. The length of a string $x{\in}\{0,1\}^n$ is denoted by $\|x\|=n$. The removal of the last bit of a string is denoted by $(p0^-){=}(p1^-){=}p$, for $p\in\{0,1\}^*$. For the empty string $\emptyset$, $(\emptyset^-)$ is undefined. We use $\FIS$ to denote $\FS{\cup}\IS$, the set of finite and infinite strings.  The $i$th bit of a string $x\in\FIS$ is denoted by $x[i]$. The first $n$ bits of a string $x\in\FIS$ is denoted by $x[0..n]$. The indicator function of a mathematical statement $A$ is denoted by $[A]$, where if $A$ is true then $[A]=1$, otherwise $[A]=0$. The size of a finite set $S$ is denoted to be $|S|$. As is typical of the field of algorithmic information theory, the theorems in this paper are relative to a fixed universal  machine, and therefore their statements are only relative up to additive and logarithmic precision.

For positive real functions $f$ the terms  ${\lea}f$, ${\gea}f$, ${\eqa}f$ represent ${<}f{+}O(1)$, ${>}f{-}O(1)$, and ${=}f{\pm}O(1)$, respectively. In addition ${\lem}f$, ${\gem}f$, and ${\eqm}$ denote $<f/O(1)$, $>f/O(1)$ and $=f*/O(1)$, respectively. For nonnegative real function $f$, the terms ${\lel}f$, ${\gel} f$, ${\eql}f$ represent the terms ${<}f{+}O(\log(f{+}1))$, ${>}f{-}O(\log(f{+}1))$, and ${=}f{\pm}O(\log(f{+}1))$, respectively. A discrete measure is a nonnegative function $Q:\W\rightarrow \R_{\geq 0}$ over whole numbers. The support of a measure $Q$ is the set of all elements $a\in\W$ that have positive measure, with $\supp(Q) = \{a\,{:}\,Q(a)>0\}$. The mean of a function $f:\W\rightarrow\R$ by a measure $Q$ is denoted by $\mathbf{E}_Q[f] = \sum_{a\in\W} f(a)Q(a)$. We say measure $Q$ is a semimeasure iff $\mathbf{E}_Q[1]\,{\leq}\,1$. Furthermore, we say that measure $Q$ is probability measure iff $\mathbf{E}_Q[1]\,{=}\,1$. The image of a measure $Q$ with respect to a (partial) function $f:\W\rightarrow\W$ is defined to be $(fQ)(x) = \sum\,\{\,Q(y)\,{:}\,f(y)\,{=}\,x, y\,{\in}\,\W\}$. If $Q$ is a semimeasure then $fQ$ is also semimeasure (and analogously for probability measures and total functions). 
\subsection{Self Delimiting Codes}
\label{subsec:selfdelimit}
The prefix operator over two strings is denoted by $\sqsubseteq$, where for finite string $x\in \FS$ and arbitrary string $y\in \FIS$, we say $x\sqsubseteq y$ iff there exists some $z\in\FIS$ such that $xz=y$. Furthermore, $x\sqsubseteq x$ for $x\in \IS$. When it is clear from the context, we will use whole numbers and other finite objects interchangeably with their binary representations. For example, each whole number $n\in\W$ can be associated with the $(n\,{+}\,1)$th string of a length increasing lexicographical ordering $\{\xi_n\}_{n=1}^\infty$, $\xi_n\in\FS$, with
\begin{align*}
(0,0), (1,1), (2,00), (3,01), (4,10), (5,11), (6,000)\dots
\end{align*}

Thus $\xi_6 = 000$. A prefix free set of of codes $S\subset \FS$ is a set of strings such that there does not exist two distinct strings $x,y$ in $S$ where one string is a prefix of the other, $x\sqsubseteq y$. By the Kraft inequality, for such a prefix free set $S$ of strings,
$$\sum_{x\in S}2^{-\|x\|}\leq 1.$$
We say such $S$ is a \textit{self-delimiting} code because there exists a method to determine where each code word $x\in S$ ends without reading past its last symbol. One such code word is $\langle x\rangle' = 1^{\|x\|}0x$, where the decoding algorithm would first count the number of $1's$ before the first 0 to determine the length of $x$ and then output the $\|x\|$ remaining bits in the input, (corresponding to $x$). Thus $\|\langle x\rangle'\| = 2\|x\|+1$. For a finite string $x$, we use $\langle x\rangle\in\FS$ to denote a more efficient self-delimiting code, with $\langle x\rangle = \langle \xi_{\|x\|}\rangle'0x$. For example, $\langle 11111\rangle = 11011011111$. Thus $\|\langle x\rangle\| \leq \|x\|+2\ceil{\log \|x\|}+2$.

The encoding of a finite set $\{x_n\}_{n=1}^m$ of strings (natural numbers) is defined to be $\langle\{x_1,\dots,x_m\}\rangle=\langle m\rangle\langle x_1\rangle\dots\langle x_m\rangle$, and it is also denoted as $\langle x_1,\dots,x_m\rangle$. The encoding of a rational number $r\in\Q$ is defined to be $\langle r\rangle = \langle p,q\rangle$ for reduced $p/q=r$. $\langle x,\alpha\rangle = \langle x\rangle\alpha$ for $x\in\FS$ and $\alpha\in\IS$. For two infinite strings $\alpha,\beta\in\IS$, their encoding is $\langle 2\rangle\langle \alpha,\beta \rangle =\alpha_1\beta_1\alpha_2\beta_2\alpha_3\beta_3\dots$. For sequences $\alpha$, $\beta$, and $\gamma$, we have that $\langle \alpha,\beta,\gamma\rangle=\langle 3\rangle\langle \langle \alpha,\beta\rangle,\gamma\rangle$, and similarly for for $n$ sequences. The encoding of a real $r$, $\langle r\rangle$, is the encoding of two infinite sequences of all rational number that are smaller and bigger than $r$, respectively. The terms $\mathrm{Dom}(F)$ and $\mathrm{Range}(F)$ denote the domain and range of a function $F$. If a function $F$ from whole numbers to whole numbers has a finite domain, then its self delimiting code is denoted by $\langle F\rangle= \langle \{( a,F(a))\;:\;a\in\mathrm{Dom}(F)\}\rangle$. We call such functions, {\em elementary maps}. A (semi) measure $Q:\W\rightarrow \Q_{\geq 0}$ with a finite support and a range of nonnegative rational numbers is called a {\em elementary (semi) measure}, and its self delimiting code is denoted by $\langle Q\rangle= \langle \{\langle  a,Q(a)\rangle\;:\;a\in\supp(Q)\}\rangle$. Both elementary maps and measures admit a finite explicit descriptions.

\subsection{Algorithms.}
Our paper uses self-delimiting machines $M$ which have four tapes: a main input tape, an auxiliary input tape, a work tape, and an output tape. The alphabet for all tapes is $\{0,1,\$\}$. We say that $M$ computes the partial function $T:\FS\times\FIS\rightarrow\FS$,  if whenever $T_\alpha(x)$ is defined, $M$ outputs $y=T_\alpha(x)$ when given $x\in\FS$ and $\alpha\in\FIS$ as input. More specifically, 
\begin{enumerate}
	\item  $M$ starts with all its heads in the leftmost square. The main input tape starts with $x\$^\infty$. The auxiliary tape is set to $\alpha$ if it is an infinite string, otherwise it starts with $\alpha\$^\infty$. The work and output tape start with $\$^\infty$.
	\item During its operation, $M$ reads exactly $\|x\|$ bits from the main input tape.
	\item The output tape is $y\$^\infty$ when $M$ halts.
\end{enumerate}
When inputs $x$ and $\alpha$ are not defined for $T$, we say $T_\alpha(x)=\perp$ and the machine $M$ does not perform the steps enumerated above. A partial function $T$ is self delimiting, or prefix free, if there is a self delimiting machine that computes it. The domain of such $T$ is prefix free, where for all $x,y\in\FS$, $\alpha\in\FIS$, with $y\,{\neq}\,\emptyset$, it must be that $T_\alpha(x)\,{=}\perp$ or $T_\alpha(xy)\,{=}\perp$.  For convenience, we use symbols to interchangeably to denote both the machines and also the partial functions (between finite strings) they compute.

We use a fixed universal prefix-free machine $U$, where for each prefix-free machine $T$, there exists $t\in\FS$ where $U_\alpha(tx) = T_\alpha(x)$ for all $x\in\FS$ and $\alpha\in\FIS$. One example is for such $t$ to be equal to $\langle i\rangle$, where $i$ is the first index of $T$ in an enumeration of self-delimiting machines. A set of whole numbers is (recursively) enumerable if it is the range of a partial recursive function. We say that a real valued function $f:\W\rightarrow \R$ over whole numbers is upper semi-computable if the set $\{(a,q)\,{:}\,f(a)<q\in\Q\}$ is enumerable. We say $f$ is lower semi-computable if $-f$ is upper semi-computable. %We say that a program $p\in\FS$ computes a map $\gamma\,{:}\,\IS\,{\rightarrow}\,\IS$ if $U_\alpha(p)=\gamma(\alpha)$ for all $\alpha\in\IS$.
\subsection{Left-Total Machines}
The notion of the ``left-total'' universal algorithm is needed for the proof of both the mixed state and pure state coding theorems. We say $x\in\FS$ is total with respect to a machine if the machine halts on all sufficiently long extensions of $x$. More formally, $x$ is total with respect to $T_y$ for some $y\in\FIS$ iff there exists a finite prefix free set of strings $Z\subset\FS$ where $\sum_{z\in Z}2^{-\|z\|}=1$ and $T_y(xz)\neq\perp$ for all $z\in Z$.  We say (finite or infinite) string $\alpha\in\FIS$ is to the ``left'' of $\beta\in\FIS$, and use the notation $\alpha\lhd \beta$, if there exists a $x\in\FS$ such that $x0\,{\sqsubseteq}\, \alpha$ and $x1\,{\sqsubseteq}\, \beta$. A machine $T$ is left-total if for all auxiliary strings $\alpha\in\FIS$ and for all $x,y\in\FS$ with $x\lhd y$, one has that $T_\alpha(y)\neq\perp$ implies that $x$ is total with respect to $T_\alpha$. An example can be seen in Figure \ref{fig:LeftTotal}.

\begin{figure}[h!]
	\begin{center}
		\includegraphics[width=0.4\columnwidth]{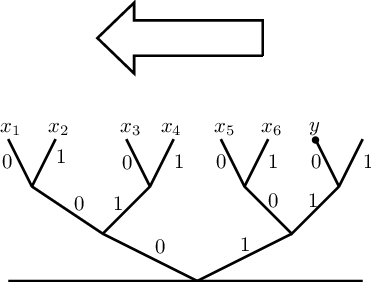}
		\caption{The above diagram represents the domain of a left total machine $T$ with the 0 bits branching to the left and the 1 bits branching to the right. For $i\in \{1..5\}$, $x_i\lhd x_{i+1}$ and $x_i\lhd y$. Assuming $T(y)$ halts, each $x_i$ is total. This also implies each $x_i^-$ is total as well.}
		\label{fig:LeftTotal}
	\end{center}
\end{figure}

For the remaining part of this paper, we can and will change the universal self delimiting machine $U$ into a universal left-total machine $U'$ by the following definition. The algorithm $U'$ enumerates all strings $p\,{\in}\,\FS$ in order of their convergence time of $U(p)$ and successively assigns them consecutive intervals $i_p{\subset}[0,1]$ of width $2^{-\|p\|}$. Then $U'$ outputs $U(p)$ on input $p'$ if the open interval corresponding to $p'$ and not that of $(p')^{-}$ is strictly contained in $i_p$. The open interval in [0,1] corresponding with $p'$ is $([p']2^{-\|p'\|},([p']{+}1)2^{-\|p'\|})$ where $[p]$ is the value of $p$ in binary. For example, the value of both strings 011 and 0011 is 3. The value of 0100 is 4. The same definition applies for the machines $U'_\alpha$ and $U_\alpha$, over all $\alpha\,{\in}\,\FIS$. We now set $U$ to equal $U'$.

\begin{figure}[h!]
	\begin{center}
		\includegraphics[width=0.4\columnwidth]{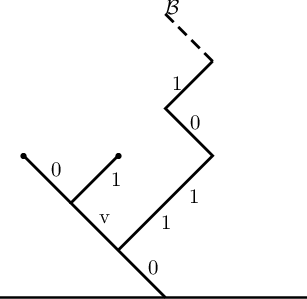}
		\caption{The above diagram represents the domain of the universal left-total algorithm $U$, with the 0 bits branching to the left and the 1 bits branching to the right. The strings in the above diagram, $0v0$ and $0v1$, are halting inputs to $U$ with $U(0v0)\neq \perp$ and $U(0v1)\neq \perp$. So $0v$ is a total string. The infinite border sequence $\mathcal{B}\in\IS$ represents the unique infinite sequence such that all its finite prefixes have total and non total extensions. All finite strings branching to the right of $\mathcal{B}$ will cause $U$ to diverge.}
		\label{fig:DomainUPrime}
	\end{center}
\end{figure}

Without loss of generality, the complexity terms of Section~\ref{sec:complexity} are defined in this section with respect to the universal left total machine $U$. The infinite border sequence $\mathcal{B}\in\IS$ represents the unique infinite sequence such that all its finite prefixes have total and non total extensions. The term ``border'' is used because for any string $x\in\FS$, $x\lhd\mathcal{B}$ implies that $x$ total with respect to $U$ and $\mathcal{B}\lhd x$ implies that $U$ will never halt when given $x$ as an initial input. Figure~\ref{fig:DomainUPrime} shows the domain of $U$ with respect to $\mathcal{B}$. The border sequence is computable from $\mathcal{H}$. 

For all total strings $b\in\FS$, we define the semimeasure $\m_b(x) = \sum \{2^{-\|p\|}\,{:}\,U(p)=x,\;p\lhd b\textrm{ or } b\sqsubseteq p\}$. If $b$ is not total then $\m_b(x)=\perp$ is undefined. Thus the algorithmic weight $\m_b$ of a string $x$ is approximated using programs that either extend $b$ or are to the left of $b$. \newpage
\subsection{Complexity}
\label{sec:complexity}
The Kolmogorov complexity of string $x\,{\in}\,\FS$ relative to string $\alpha\,{\in}\,\FIS$ is defined to be $\K(x|\alpha) = \min \{\|p\|\,{:} \,U_\alpha(p)=x\}$. We will use the term algorithmic entropy interchangeably with Kolmogorov complexity. The Solomonoff prior is defined to be the probability that $U$ outputs $x$ when given random bits as input, with $\m(x|\alpha) = \sum_{p\,{:}\,U_\alpha(p)=x}2^{-\|p\|}$. 

The function $\m$ is a universal lower semi-computable semi-measure, in that for any lower semi-computable semi-measure $p$, there exists $c_p\in\N$ such that for all $x\in\FS$, one has that $p(x) <c_p\m(x)$. By the coding theorem, for all $x\,{\in}\,\FS$, one gets that $\K(x) \eqa -\log\m(x)$. The halting sequence $\ch\in\IS$ is the characteristic sequence of the domain of $U$. Chaitin's omega $\Omega\in\R$ is the probability that the universal machine $U$ will halt when given random bits as input, with $\Omega = \sum_{a\in \W} \m(a)$. The mutual information of two strings $x$ and $y$, conditional to $\alpha\in\FIS$ is $\I(x\,{:}\,y|\alpha)\,{=}\,\K(x|\alpha)\,{+}\,\K(y|\alpha)\,{-}\,\K(x,y|\alpha)$. The information that the halting sequence $\ch$ has about $x$ (or any finite object it encodes), conditional to $\alpha\in\FIS$ is denoted by $\I(x;\ch|\alpha)=\K(x|\alpha){-}\K(x|\alpha,\ch)$. This is the difference of the algorithmic entropy of $x$ and the algorithmic entropy $x$ relativized to $\ch$ (with both complexities also relativized to $\alpha$). 
We define information with respect to two infinite sequences $\alpha,\beta\in\IS$, conditional to $\xi\in\FIS$, as $\I(\alpha\,{:}\,\beta|\xi)=\log\sum_{x,y\in\FS}\m(x|\alpha,\xi)\m(y|\beta,\xi)2^{\I(x\,{:}\,y|\xi)}$, introduced in [\citealt*{Levin74}]. 

This information function follows algorithmic conservation laws. The non-probabilistic conservation law states that for partial recursive $f:\IS\rightarrow\IS$, $\alpha,\beta\in\IS$, $\I(f(\alpha):\beta)\lea\I(\alpha:\beta)+\K(f)$, where $\K(f)$ is the size of the smallest $U$ program $p\in\FS$ which computes $f$, where $U(p\alpha)=p(\alpha)$. This inequality and the probabilistic conservation inequality of $\I$ were stated in [\citealt*{Levin74}], and their proofs can be found in [\citealt*{Vereshchagin19}].

\section{Extended Coding Theorem}

The main theorem of this paper is the extended coding theorem, which states that for an elementary map $f:\W\rightarrow\W$, with domain $D$, $\min_{a\in D}f(a)+\K(a)$ is close to $-\log \sum_{a\in D}\m(a)2^{-f(a)}$, for maps that have low mutual information with the halting sequence.\\

\noindent\textbf{Theorem.}\\ \textit{For elementary map $f$,
	$\min_{a\in \dom(f)} f(a)+\K(a)\lel -\log \sum_{a\in\dom(f)}\m(a)2^{-f(a)} +\I( \langle f\rangle;\mathcal{H})$.\\
	\label{thr:MainTheoremAlgStats}}

\noindent \textit{Proof Sketch.} The proof of this theorem invokes Lemma \ref{lem:elemmap} which is a statement similar to Theorem \ref{thr:MainTheoremAlgStats} except it uses an elementary semi measure and a stochasticity term instead of mutual information with the halting sequence. The elementary semi measure used is $\m_b$ for a total string $b$. The proof invokes Lemma \ref{lemmaStochastic} to replace the stochasticity term with mutual information with the halting sequence. Then conditioning of terms on $b$ is removed.\\ 

The deficiency of randomness of $x\in\FS$ with respect to probability measure $Q$, conditional to $v\in\FS$, is $\d(x|Q,v)=\floor{\,-\log Q(x)\,}{-}\K(x|v)$, and we say 
$x$ is typical of $Q$, conditional to $v$, if $\d(x|Q,v)=O(1)$. We say that $x$ is stochastic if it is typical of some simple probability measure. Stochasticity is a fundamental part of algorithmic statistics, and a comprehensive survery of this topic can be found in [\citealt*{VereshchaginSh16}].

More formally, one says that $x$ is $(j,k)$ stochastic for $j,k\in\N$ if there exists a elementary probability measure $Q$, with  $\langle Q\rangle=U(v)$, $v= \{0,1\}^j$, and $\d(x|Q,v)\leq k$. We use the stochasticity function $\Ks(x)=\min\{j{+}3\log(k)\,{:}\,x\textrm{ is }(j,k)\textrm{ stochastic}\}$. The conditional stochasticity form\footnote{This is formally represented as $\Ks(x|\alpha) = \min\{j+3\log(k)\,{:}\,\exists v\in\{0,1\}^j, U_\alpha(v)=\langle Q\rangle, \d(x|Q, \langle v,\alpha\rangle)\leq k\}$.}
is represented by $\Ks(x|\alpha)$, for $\alpha\,{\in}\,\FIS$. We recall from Section \ref{sec:conv} that functions between whole numbers are called {\em elementary maps} if they have a finite domain.  Semi measures over whole numbers with finite support and with a range containing only nonnegative rational numbers are called {\em elementary semi measures}. 

\begin{lemma} 
	\label{lem:elemmap}
	Let $f$ be a elementary map and $m$ be a elementary semi measure. Let $a\in\W$ vary over $\dom(f)$. Then $\min_a f(a)+\K(a|m)\lel -\log \sum_a m(a)2^{-f(a)} + \Ks(f|m)$.
\end{lemma}

\begin{proof} If $m$ is not a proper probability measure, and $R$ is the support of $m$, we modify $m$  to give an arbitary $b\in\W$, the value of $1-m(R)$. So $m$ can be assumed to be an elementary probability measure. Since all terms in the theorem are conditioned on $\langle m\rangle$, we will also condition all complexity terms in the proof on $\langle m\rangle$ and drop its notation. More formally, $U(x)$ is used to denote $U_{\langle m\rangle}(x)$, $\K(x)$ is used to denote $\K(x|m)$,  and $\Ks(f)$ is used to denote $\Ks(f|m)$. \\
	
\noindent  For any elementary map $g$, let $g_n = g^{-1}(n)\cap\supp(m)$ and let $g_{\leq n}=\cup_{i=0}^n g_i$, for $n\,{\in}\,{\W}\,{\cup}\,\{\infty\}$. Let $s=\ceil{-\log \sum_{a\in f_{\leq \infty}}m(a)2^{-f(a)}}$. Using the reasoning of Markov's inequality,
	\begin{align}
	\label{eqn:sumLeqS1}
	\sum_{a\in f_{\leq \infty}}m(a)2^{-f(a)} &\geq 2^{-s},\\
	\label{eqn:sumLeqS2}
	\sum_{a\in f_{\leq \infty}{\setminus}f_{\leq s}}m(a)2^{-f(a)} &\leq \sum_{a\in f_{\leq \infty}{\setminus}f_{\leq s}}m(a)2^{-s-1} \leq 2^{-s-1},\\
	\label{eqn:sumLeqS3}
	\sum_{a\in f_{\leq s}}m(a)2^{-f(a)} &\geq 2^{-s-1}.
	\end{align}
	Equation~(\ref{eqn:sumLeqS1}) follows from the definition of $s$ and Equation~(\ref{eqn:sumLeqS3}) follows from Equations~(\ref{eqn:sumLeqS1}) and~(\ref{eqn:sumLeqS2}). We now turn our attention to creating an elementary probability measure $Q$ with the following properties:
	\begin{enumerate}
		\item $f$ is typical of $Q$ and $Q$ is simple, i.e. there is a $v\in\FS$ with $U(v)=\langle Q\rangle$ and $\|v\| +3\log\max\{\d(f|Q,v),1\}$ is not much larger than $\Ks(f)$.
		\item All strings in the support of $Q$ encode elementary functions $g$ whose range contain a lot of values that are not greater than $s$, with $\sum_{a\in g_{\leq s}}m(a)2^{-g(a)} \geq 2^{-s-1}$.
	\end{enumerate}
	
	To accomplish this goal, we start with the program $v'\,{\in}\,\FS$ and elementary probability measure $Q'$ that realizes the stochasticity of $f$, with $U(v')=\langle Q'\rangle$, and also with the relation $\Ks(f)\,{=}\,{\|v'\|}\,{+}3\log\max\{\d(f| Q',v'),1\}$. Note that this implies $\langle f\rangle\in\supp(Q')$. Let $Q$ be the elementary probability measure equal to $Q'$ conditioned on the set of (encoded) elementary maps $g$ such that $\sum_{a\in g_{\leq s}}m(a)2^{-g(a)}\geq 2^{-s-1}$. Thus $Q(\langle g\rangle) = [g\,{\in}\,S]Q'(g)/Q'(S)$, where $S\subset\FS$ , the support of $Q$, is defined as $S = \{\langle g\rangle\,{:}\,g\in\supp(Q'), \sum_{a\in g_{\leq s}}m(a)2^{-g(a)}\geq 2^{-s-1}\}$. This $Q$ is computable from $v'$ and $s$. Using this fact, define the $Q$ program $v\in\FS$, to be of the form $v=v_0v_sv'$, where $v_0\in\FS$ is helper code of size $O(1)$, and $v_s\in\FS$ is a shortest $U$-program for $s$. So $\|v\| \lea \|v'\|+\K(s)$. We define $d=\max\{\d(f|Q,v),1\}$ and we have that
	\begin{align}
	\nonumber
	\|v\|&\lea \|v'\| + \K(s),\\
	\nonumber
	\|v\|+3\log d &\lea \|v'\| + \K(s) + 3\log d\\
	\nonumber
	&\lea \|v'\| + \K(s) + 3\log(\max\{-\log Q(f) - \K(f|v),1\})\\
	\label{eq:StochF0}
	&\lea \|v'\| + \K(s) + 3\log(\max\{-\log Q'(f) - \K(f|v),1\})\\
	\label{eq:StochF1}
	&\lea \|v'\| + \K(s) +3\log(\max\{-\log Q'(f) - \K(f|v')+\K(v|v'),1\})\\
	\label{eq:StochF2}
	&\lea \|v'\| + \K(s) + 3\log(\max\{-\log Q'(f) - \K(f|v')+\K(s),1\})\\
	\nonumber
	&\lel \|v'\| +\K(s)+ 3\log(\max\{-\log Q'(f)-\K(f|v'),1\}),\\
	\label{eq:StochF3}
	\|v\|+3\log d &\lel \Ks(f)+\K(s).
	\end{align}
	Equation~(\ref{eq:StochF0}) follows from $Q(f)\,{=}\,Q'(f)/Q'(\supp(Q))$, and thus $-\log Q(f)\,{\leq}\,-\log Q'(f)$. Equation~(\ref{eq:StochF1}) follows from the inequality $\K(f|v')\lea \K(f|v)+\K(v|v')$. Equation~(\ref{eq:StochF2}) follows from $v$ being computable from $v'$ and $v_s$, and thus $\K(v|v') \lea \K(s)$. \\
	
	We now create a small set of lists of numbers $A$ that will intersect with the range of a large percentage of the support of $Q$. We do so by using the probabilistic method. Let $c\in\N$ be a constant solely dependent on the universal Turing machine $U$ {\tiny }to be determined later. We use an elementary measure $w_n$ over lists $A^n$ of (possibly repeating) whole numbers of size $cd2^{s+1-n}$ where $w_n(A^n)=\prod_{i=1}^{cd2^{s+1-n}}m(A_i^n)$. For a set of $s+1$ lists $A=\{A^n\}_{n=0}^s$, we a measure $w$ over $A$, where $w(A)=\prod_{n=0}^sw_n(A^n)$.
	
	For a set of lists $A$ and elementary function $g$, let $\mathbf{1}(g,A) = 1$ if $g_n\cap A^n =\emptyset$ for all $n\,{\in}\,[0,s]$, and $\mathbf{1}(g,A) = 0$, otherwise. Thus
	\begin{align}
	\nonumber
	\mathbf{E}_{g\sim Q}\mathbf{E}_{A\sim w}[\mathbf{1}(g,A)]&= \sum_g Q(g)\prod_{n=0}^s(1-m(g_n))^{|A^n|}\\
	\label{eq:ee2}
	&\leq \sum_g Q(g)\prod_{n=0}^s \exp\{-|A^n|m(g_n)\}\\
	\nonumber
	&= \sum_g Q(g)\exp\left\{-\sum_{n=0}^s|A^n|m(g_n)\right\}\\
	\nonumber
	&= \sum_g Q(g)\exp\left\{-\sum_{n=0}^scd2^{s+1-n}m(g_n)\right\}\\
	\nonumber
	&= \sum_g Q(g)\exp\left\{-cd2^{s+1}\sum_{n=0}^sm(g_n)2^{-n}\right\}\\
	\label{eq:ee3}
	\mathbf{E}_{g\sim Q}\mathbf{E}_{A\sim \lambda}[\mathbf{1}(g,A)] &\leq \sum_g Q(g)\exp\left\{-cd\right\}=\exp\left\{-cd\right\}.
	\end{align}
	
	\noindent %Equation~(\ref{eq:ee1}) follows from the definition of expectation and the definition of $\Gamma$.
	Equation~(\ref{eq:ee2}) follows from the inequality $(1{-}a){\leq}e^{-a}$ over $a\,{\in}\,[0,1]$. Equation~(\ref{eq:ee3}) follows from the definition of the support of $Q$, where $g\in\supp(Q)$ iff $\sum_{a\in g_{\leq s}}m(a)2^{-g(a)}\geq 2^{-s-1}$. By the probability argument, there exists a set of lists $A\,{=}\,\{A^n\}_{n=0}^s$ such that $|A^n|\,{=}\,cd2^{s+1-n}$ and
	\begin{align*}
	\mathbf{E}_{g\sim Q}[\mathbf{1}(g,A)] &\leq \exp\{-cd\}. 
	\end{align*}
	\noindent  There exists a brute force search algorithm that on input $c$, $d$, $v$, outputs $A$. Note that the strings $s$ and $\langle Q\rangle$ are computable from $v$. This algorithm computes all possible sets of lists $A'=\{A'^n\}_{n=0}^s$, $|A'^n|=cd2^{s+1-n}$, $A'^n\subseteq \supp(Q)$ and outputs the first $A'$ such that $\mathbf{E}_{g\sim Q}[\mathbf{1}(g,A')] \leq \exp\{-cd\}$. The existence of such an $A'$ is guaranteed by Equation~(\ref{eq:ee3}). So
	\begin{align}
	\label{eq:CompA}
	\K(A)&\lea \K(c,d,v).
	\end{align}
	
	We now show that there is an $n$ where $f_n\cap A^n\neq\emptyset$. To do so, we show that any function $g$ in the support of $Q$ whose range does not interesct with $A$, i.e.  $\mathbf{1}(g,A)=1$ will have a very high deficiency of randomness with respect to $Q$ and $v$. For all such $g$ and proper choice of $c$ solely dependent on $U$,
	\begin{align}
	\nonumber
	\d(g|Q,v) &= \floor{-\log Q(g)}-\K(g|v)\\
	\label{eq:gTyp1}
	&>-\log Q(g)-(-\log \mathbf{1}(g,A)\floor{e^{cd}}Q(g)+\K(\mathbf{1}(\cdot,A)\floor{e^{cd}}Q(\cdot)|v))-O(1)\\
	\nonumber
	&> cd \log e - \K(\mathbf{1}(\cdot,A)\floor{e^{cd}}Q(\cdot)|v) - O(1)\\
	\nonumber
	&> cd\log -\K(A,c,d|v) - O(1)\\
	\label{eq:gTyp2}
	&> cd \log e -\K(c,d) >d.
	\end{align}
	
	With $c$ being chosen, it is removed from consideration for the rest of the proof, with $c\in\O(1)$. Equation \ref{eq:gTyp1} is due to the fact that for any elementary semimeasure $P$, $\K(x)\lea \K(P)-\log P(x)$. Equation \ref{eq:gTyp2} is due to Equation \ref{eq:CompA}. So $\mathbf{1}(f,A)=0$, otherwise by the above equation, $\d(f|Q,v)>d$, causing a contradiction. So there exists $n\,{\in}\,[0,s]$ with $a\in f_n\cap A^n$ and
	\begin{align}
	\nonumber
	\K(a) &\lea \log |A^n|\,{+}\,\K(A^n)\\
	\nonumber
	&\lea \log |A^n|\,{+}\,\K(A)\,{+}\,\K(A^n|A)\\
	\label{eq:SimpleElement1}
	& \lea (\log d\,{+}\,s\,{-}\,n)\,{+}\,\K(d,v)\,{+}\,\K(n)\\
	\nonumber
	&\eqa \log d\,{+}\, s\,{-}\,f(a)\,{+}\, \K(d,v)\,{+}\,\K(f(a))  \\
	\nonumber
	\K(a)+f(a) &\lea \log d + s + \K(v)+\K(d) +\K(f(a))\\ 
	\label{eq:SimpleElement2}
	\K(a) +f(a) & \lel s\,{+}\, \|v\|\,{+}\,3\log d\\
	\label{eq:SimpleElement3}
	\K(a) +f(a) & \lel s \,{+}\,\chi(f)\\
	\label{eq:SimpleElement4}
	\min_{a\in f_{\leq \infty}}\K(a) \,{+}\,f(a) & \lel {-}\,{\log} {\sum_{a\in f_{\leq \infty}}m(a)2^{-f(a)}}\,{+}\,\chi(f).
	\end{align}
	Equation~(\ref{eq:SimpleElement1}) follows from Equation~(\ref{eq:CompA}), and from $c\in O(1)$. Equation~(\ref{eq:SimpleElement2}) follows from $\K(x)\lel \|x\|$ for $x\in \FS\cup\W$. Equation~(\ref{eq:SimpleElement3}) follows directly from Equation~(\ref{eq:StochF3}). Equation~(\ref{eq:SimpleElement4}) follows from the definition of $s$ and its form proves the theorem.
\end{proof}

\begin{proposition} 
	\label{prp:borderprefix}
	For border prefix $b\sqsubseteq\mathcal{B}$,  $\K(b|\mathcal{H}) \lea \K(\|b\|)$ and $\|b\|\lea\K(b)$.
\end{proposition}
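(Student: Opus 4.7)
The plan is to prove the two inequalities separately, leveraging the stated identification of the border sequence $\mathcal{B}$ with the binary expansion of Chaitin's $\Omega = \sum_{a\in\W}\m(a)$. For any prefix $b\sqsubseteq\mathcal{B}$ of length $n$, letting $\omega_b\in\Q$ denote the dyadic rational with expansion $0.b$, we have $\omega_b\le\Omega<\omega_b+2^{-n}$, so $b$ encodes $\Omega$ from below to within $2^{-n}$.

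For $\|b\|\lea\K(b)$, I will adapt Chaitin's classical incompressibility argument. From $b$ one recovers both $n=\|b\|$ (since $b$ arrives as a self-delimited string on the input tape) and the lower approximation $\omega_b$ to $\Omega$. Dovetail the universal machine on all inputs and maintain a running sum $S=\sum 2^{-\|p\|}$ over the halting programs $p$ discovered so far; halt the enumeration once $S\ge\omega_b$. At that point the remaining halting mass is at most $\Omega-S<2^{-n}$, so every halting program of length $\le n$ must already have been enumerated, since any missing one would contribute at least $2^{-n}$ and exceed the slack. Hence from $b$ we may effectively compute the lexicographically first string $x$ that is not output by any enumerated program of length $\le n$, and this $x$ satisfies $\K(x)>n$. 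Since the whole construction is a fixed algorithm taking $b$ as input, $\K(x)\lea\K(b)$, and combining gives $n\lea\K(b)$.

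For $\K(b\mid\mathcal{H})\lea\K(\|b\|)$, the argument is short: relative to the halting sequence, the domain of the universal machine is decidable, so $\Omega$ can be computed to arbitrary precision by enumerating halting programs and summing $2^{-\|p\|}$. Given $\mathcal{H}$ together with a shortest description of $\|b\|$, we therefore output the first $\|b\|$ bits of $\Omega$, which by the stated identification with $\mathcal{B}$ equal $b$. This yields $\K(b\mid\mathcal{H})\lea\K(\|b\|)$.

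The main obstacle lies in the first part, where care is needed on two points: that $n$ is genuinely recoverable from the self-delimited input $b$ so that the halting threshold $\omega_b$ is set with the correct scale, and that the enumeration is carried out with respect to the universal machine (now $U'$) whose halting probability actually defines $\mathcal{B}$. Both are straightforward in the present left-total setting, and the remaining manipulations are standard approximations to $\Omega$ by enumeration versus oracle access to $\mathcal{H}$.
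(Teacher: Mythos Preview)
Your argument is correct; it is precisely Chaitin's classical incompressibility proof for $\Omega$ together with the Turing equivalence $\Omega\equiv_T\mathcal{H}$. The paper itself states this proposition without proof, having already recorded in the preceding paragraph that $\mathcal{B}$ is the binary expansion of $\Omega$, is random, and is computable from $\mathcal{H}$ --- exactly the facts your two parts unpack.

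Two minor remarks. For part one, your phrase ``$b$ arrives as a self-delimited string on the input tape'' is slightly misleading: $b$ is not itself a program but the \emph{output} of a shortest program $b^*$; the point is simply that from the output $b$ one reads off $n=\|b\|$ and $\omega_b$, and then your fixed procedure manufactures $x$ with $\K(x)>n$, whence $n<\K(x)\lea\K(b)$. For part two, ``$\Omega$ can be computed to arbitrary precision by enumerating halting programs and summing $2^{-\|p\|}$'' needs the halting oracle not only to enumerate the domain but to certify when the remaining mass is below $2^{-n}$; the clean way in the present setting is to use $\mathcal{H}$ to decide totality for $U'$ directly and read off $\mathcal{B}$ bit by bit. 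Both points are routine and do not affect correctness.
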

\begin{proof}
	The border $\mathcal{B}$ is computable from the halting sequence $\mathcal{H}$, so it follows easily $\K(b|\mathcal{H})\lea\K(\|b\|)$. We recall that $\Omega =\sum_x\m(x)$ is Chaitin's Omega, the probability that U will halt. It is well known that the binary expansion $\Omega'\in\IS$ of $\Omega$ is Martin L\"{o}f random. Given $b\sqsubset\mathcal{B}$, $\|b\|\in\BT^n$, one can compute $\hat{\Omega}=\sum\{2^{-\|y\|}[U(y)\neq\perp] : y\lhd b\}$ with differs from $\Omega$ in the summation of programs which branch from $\mathcal{B}$ at positions $n+1$ or higher. Thus $\Omega-\hat{\Omega}\leq 2^{-n}$. So $n\lea \K(\Omega'[0..n-1]) \lea \K(\Omega'[0..n-1],b)\lea \K(\Omega'[0..n-1]|b)+\K(b)\lea \K(b)$. 
\end{proof}

\begin{proposition} 
	\label{prp:nontotalprefix}
	If $b\in\FS$ is total and $b^-$ is not total, then $b^-$ is a border prefix, with $b^-\sqsubset\mathcal{B}$. 
\end{proposition}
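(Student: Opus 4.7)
The plan is to argue by a simple trichotomy. For any finite string $x\in\FS$ and any infinite string $\alpha\in\IS$, looking at the first position where $x$ and the length-$\|x\|$ prefix of $\alpha$ disagree (if any), exactly one of three possibilities holds: $x\lhd\alpha$, $x\sqsubseteq\alpha$, or $\alpha\lhd x$. I would apply this trichotomy to $b^-$ and $\mathcal{B}$ and eliminate the two non-prefix cases, leaving exactly $b^-\sqsubseteq\mathcal{B}$ as asserted.

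First, to rule out $b^-\lhd\mathcal{B}$: the paper has already established that $x\lhd\mathcal{B}$ implies $x$ is total with respect to $U'$. Since $b^-$ is non-total by hypothesis, $b^-\lhd\mathcal{B}$ cannot hold.

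Second, to rule out $\mathcal{B}\lhd b^-$: the key observation is that being strictly to the right of $\mathcal{B}$ is inherited by extensions. Concretely, if $\mathcal{B}\lhd b^-$ with witness $w\in\FS$ satisfying $w0\sqsubseteq\mathcal{B}$ and $w1\sqsubseteq b^-$, then for every extension $y\sqsupseteq b^-$ the same $w$ still witnesses $\mathcal{B}\lhd y$. By the paper's characterization, this forces $U'(y)=\perp$ for every $y\sqsupseteq b^-$. On the other hand, totality of $b$ yields a prefix-free set $Z\subseteq\FS$ of Kraft measure $1$ with $U'(bz)\neq\perp$ for every $z\in Z$; any such $bz$ is an extension of $b^-$ on which $U'$ halts, a contradiction. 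Hence $\mathcal{B}\lhd b^-$ is also ruled out.

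By trichotomy, the only remaining option is $b^-\sqsubseteq\mathcal{B}$. The only substantive step beyond bookkeeping is the extension-inheritance of $\mathcal{B}\lhd b^-$; this is where I would expect any difficulty, and it is minor, amounting to reusing the same splitting witness $w$.
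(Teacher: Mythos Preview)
Your proof is correct. The paper states this proposition without proof, treating it as immediate from the defining property of $\mathcal{B}$ and the characterization given just before (that $x\lhd\mathcal{B}$ forces $x$ total and $\mathcal{B}\lhd x$ forces $U'$ never to halt on inputs extending $x$); your trichotomy argument is exactly the natural way to unpack that, and the extension-inheritance step you flag is indeed the only point requiring a moment's thought.
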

\begin{proof}
	If $b\in \FS$ is total and $b^-$ is not, then $b^-$ has a total extension $b^-0$ and a non total extension $b^-1$, thus by the definition of the border sequence, $b^-\sqsubset\mathcal{B}$.
\end{proof}
The following lemma shows that non-stochastic strings $x$ are ``exotic,'' {i.e.} have high $\I(x\,{;}\,\ch)$ information with the halting sequence.

\begin{figure}[t]
	\begin{center}
		\includegraphics[width=0.35\columnwidth]{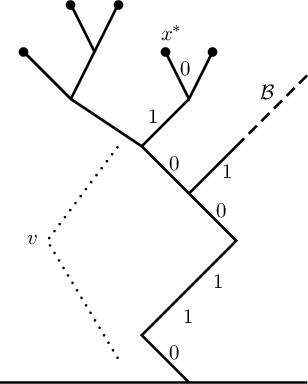}
		\caption{The above figure shows an example of the domain of left-total $U$ with the terms used in Lemma~\ref{lemmaStochastic}. $x^*=0110010$ and $v=01100$. Since $v$ is total and $v^-$ is not, $v^-$ is a prefix of the border sequence $\mathcal{B}$. In the above example, assuming all halting extensions of $v$ produce a unique output, $|\mathrm{Support}(Q)|= 5$, and $Q(x)=2^{-\|x^*\|+\|v\|}=0.25$.}
		\label{fig:LemMain}
	\end{center}
\end{figure}

\begin{lemma}
	\label{lemmaStochastic} 
	For  $x\in\FS$, $\Ks(x)\lel\I(x\,{;}\,\ch)$.
\end{lemma}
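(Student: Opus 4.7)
The plan is to construct, for any $x \in \FS$, a simple probability measure $Q$ built from a carefully chosen total prefix of the shortest $U'$-program for $x$, and then to show both that $x$ is typical with respect to $Q$ and that the description length of $Q$ is bounded above by $\I(x\,{;}\,\ch)$, up to logarithmic precision.

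First I would take $x^* \in \FS$ to be a shortest $U'$-program for $x$ and let $v \sqsubseteq x^*$ be the shortest prefix of $x^*$ that is total with respect to $U'$. By Proposition~\ref{prp:nontotalprefix}, $v^{-}$ is a prefix of the border sequence, $v^{-} \sqsubseteq \mathcal{B}$. Write $x^* = vz_0$, so $\|z_0\| = \K(x) - \|v\|$. Because $v$ is total, feeding $v$ followed by random bits into $U'$ induces a primitive probability measure
\[
Q(y) \;=\; \sum \{\, 2^{-\|z\|} \,:\, U'(vz) = y,\; vz \text{ a halting input}\,\},
\]
and $Q$ is computable from $v$ with $O(1)$ overhead, so there is a program $w$ of length $\|v\| + O(1)$ with $U(w) = \langle Q\rangle$.

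Next I would estimate the deficiency $\d(x|Q,w)$. Since $x^* = vz_0$ is halting, $Q(x) \ge 2^{-\|z_0\|}$, whence $-\log Q(x) \le \K(x) - \|v\|$. For the lower bound on $\K(x|w)$, I would use that the border prefix $v^{-}$ has $\K(v^{-}) \lea \|v^{-}\|$ by Proposition~\ref{prp:borderprefix}, and therefore $\K(v) \lea \|v\|$ and $\K(w) \lea \|v\|$. The standard inequality $\K(x) \lea \K(x|w) + \K(w)$ then gives $\K(x|w) \gea \K(x) - \|v\| - O(\log \|v\|)$, so
\[
\d(x|Q,w) \;\le\; -\log Q(x) - \K(x|w) \;=\; O(\log \|v\|) \;=\; O(\log \K(x)).
\]
Hence $x$ is $(j,k)$-stochastic with $j = \|v\| + O(1)$ and $k = O(\log \K(x))$, and since $\lambda(n) = O(n)$,
\[
\Ks_\lambda(x) \;\le\; j + \lambda(k) \;\le\; \|v\| + O(\log \K(x)).
\]

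Finally, I would lower-bound $\I(x\,{;}\,\ch)$ by $\|v\|$, up to a logarithmic error. Since $\mathcal{B}$ is computable from $\ch$ and $v^{-} \sqsubseteq \mathcal{B}$, given $\ch$ one recovers $v^{-}$ from its length, so $\K(v|\ch) \lea \K(\|v\|) \lel \log \|v\|$; concatenating with $z_0$ (of length $\K(x) - \|v\|$) then yields $x$, giving $\K(x|\ch) \lel \K(x) - \|v\|$. Therefore
\[
\I(x\,{;}\,\ch) \;=\; \K(x) - \K(x|\ch) \;\gel\; \|v\|,
\]
which combined with the previous bound gives $\Ks_\lambda(x) \lel \I(x\,{;}\,\ch)$, as required.

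I expect the main obstacle to be bookkeeping on the logarithmic error terms, particularly ensuring that the $O(\log \|v\|)$ slack picked up when relating $\K(v)$, $\K(w)$, and $\K(\|v^{-}\|)$ is absorbed cleanly into the $\lel$ notation, and verifying that the deficiency bound holds with the correct conditioning (on the program $w$ for $Q$ rather than on $Q$ itself). The conceptual core --- using the shortest total prefix of $x^*$ as a universal statistic whose descriptive cost is controlled by the border sequence, hence by $\ch$ --- is the standard trick from algorithmic statistics, and the rest is careful accounting.
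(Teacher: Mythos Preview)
Your approach is exactly the paper's: take the shortest total prefix $v$ of $x^*$, use the induced probability measure $Q$ as the witness for stochasticity, and bound $\|v\|$ by $\I(x\,{;}\,\ch)$ via the border sequence. Two bookkeeping slips need fixing.

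First, you invoke Proposition~\ref{prp:borderprefix} to get $\K(v^{-})\lea\|v^{-}\|$, but that proposition gives the \emph{reverse} inequality $\|v^{-}\|\lea\K(v^{-})$. The direction you actually need, $\K(v)\lea\|v\|+\K(\|v\|)$, is the trivial bound valid for any string and needs no proposition; just drop the citation.

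Second, and more importantly, after correctly obtaining $\d(x|Q,w)=O(\log\|v\|)$ you loosen this to $O(\log\K(x))$ and carry that through to $\Ks_\lambda(x)\le\|v\|+O(\log\K(x))$. This loosening is fatal for the final $\lel$: when $x$ is a long random string, $\K(x)$ is large while $\I(x\,{;}\,\ch)$ and $\|v\|$ are $O(1)$, so an $O(\log\K(x))$ error cannot be absorbed into $\lel\I(x\,{;}\,\ch)$. Keep the bound as $O(\log\|v\|)$ throughout; then $\Ks_\lambda(x)\le\|v\|+O(\log\|v\|)\lel\|v\|$, and combined with $\|v\|\lel\I(x\,{;}\,\ch)$ (which you establish correctly) the conclusion follows. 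The paper does precisely this, arriving at $\d(x|Q,v)\lea\K(\|v\|)$ and then $\Ks_\lambda(x)\le\|v\|+O(\K(\|v\|))\lel\|v\|$.
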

\begin{proof}
	Let $U(x^*)=x$, $\|x^*\|=\K(x)$, and $v$ be the shortest total prefix of $x^*$. We define the elementary probability measure $Q$ such that $Q(a) = \sum_w 2^{-\|w\|}[U(vw)\,{=}\,a]$. Thus $Q$ is computable relative to $v$. In addition, since $v\sqsubseteq x^*$, one has the lower bound $Q(x) \geq  2^{-\|x^*\|+\|v\|} = 2^{-\K(x)+\|v\|}$. Therefore 
	\begin{align}
	\nonumber
	\mathbf{d}(x|Q,v) &= \floor{-\log Q(x)} - \K(x|v) \\
	\nonumber
	&\leq  \K(x)-\|v\| - \K(x|v)\\ 
	\nonumber
	&\lea (\K(v) +\K(x|v)) - \|v\| -\K(x|v)\\
	\nonumber
	&\lea (\|v\|+\K(\|v\|)+\K(x|v)) - \|v\| -\K(x|v),\\
	\mathbf{d}(x|Q,v) &\lea \K(\|v\|).\label{eqnDef}
	\end{align}
	
	Since $v$ is total and $v^{-}$ is not total, b\textit{}y proposition~(\ref{prp:nontotalprefix}), $v^-$ is a prefix of the border sequence $\mathcal{B}$ (see~Figure~\ref{fig:LemMain}). In addition, $Q$ is computable from $v$. Therefore
	\begin{align}
	\nonumber
	\K(x|\ch) &\lea \K(x|Q) +\K(Q|\ch)\\
	\nonumber 
	&\lea \K(x|Q) + \K(v|\ch)\\
	&\lea -\log Q(x)+\K(\|v\|)\label{eq:Halting1}\\ 
	\nonumber
	&\lea \K(x) - \|v\| +\K(\|v\|),\\
	\nonumber
	\|v\| &\lea \K(x)-\K(x|\ch) +\K(\|v\|),\\
	\|v\| &\lel \I(x;\ch).\label{eq:Halting2}
	\end{align}
	
	Equation~(\ref{eq:Halting1}) is due to Proposition~(\ref{prp:borderprefix}). Since $Q$ is computable from $v$, one gets $\Ks(x) \lea \K(v)+3\log(\max\{\mathbf{d}(x|Q,v),1\})\lea \|v\|+\K(\|v\|)+3\log(\max\{\mathbf{d}(x|Q,v),1\})$. Due to Equation~\ref{eqnDef}, one gets $\Ks(x) \leq \|v\| + O( \K(\|v\|))\lel \|v\|$. Due to Equation~\ref{eq:Halting2}, one gets $\Ks(x)\lel \I(x;\ch)$.
\end{proof}

\noindent\textbf{Theorem 1.} $ $\\ For elementary map $f$,
	$\min_{a\in \dom(f)} f(a)+\K(a)\lel -\log \sum_{a\in\dom(f)}\m(a)2^{-f(a)} +\I( \langle f\rangle;\mathcal{H})$.
	\label{thr:MainTheoremAlgStats}

\begin{proof} 
	Let $s=\ceil{1-\log \sum_{a\in\dom(f)}\m(a)2^{-f(a)}}$ and let $S(z)=\ceil{-\log \sum_{a\in\dom(f)}\m_z(a)2^{-f(a)}}$ be a partial recursive function from strings to rational numbers. $S$ is defined solely on total strings, where $S(z)\neq\perp$ iff $z$ is total. For total strings $z$, $z^-$, one has that $\m_{z^-}(x)\,{\geq}\,\m_{z}(x)$ and therefore $S(z^-) \leq S(z)$. Let $b$ be the shortest total string with the property that $S(b)\,{<}\,s$. This implies $S(b^-)=\perp$ and thus $b^-$ is not total. So by proposition~(\ref{prp:nontotalprefix}), $b^-\sqsubseteq\mathcal{B}$ is a prefix of border.  Lemma~\ref{lem:elemmap}, with $U$ containing $b$ on an auxilliary tape, with $m(a) = \m_b(a)$, provides $a\in \W$ such that $\K(a|m,b)+f(a) \lel s + \Ks(f|m,b)$. Since $\K(m|b)=O(1)$, we have Equation~(\ref{eq:mainAlgStat1}). Lemma~(\ref{lemmaStochastic}), conditional on $b$, results in Equation~(\ref{eq:mainAlgStat2}), with
	\begin{align}
	\label{eq:mainAlgStat1}
	\K(a|b)+f(a) &\lel s + \Ks(f|b),\\
	\label{eq:mainAlgStat2}
	\K(a|b)+f(a) &\lel s + \I(f;\mathcal{H}|b),\\
	\label{eq:mainAlgStat25}
	\K(a|b)+f(a) &\lel s + \K(f|b) -\K(f|b,\mathcal{H}).
	\end{align}
	Using the fact that $\K(a)\lea \K(a|b)+\K(b)$, we get $\K(a)-\K(b)\lea \K(a|b)$, and combined with Equation~(\ref{eq:mainAlgStat25}), we get Equation~(\ref{eq:mainAlgStat3}). Equation~(\ref{eq:mainAlgStat4}) is due to the chain rule $\K(b)+\K(f|b)\lel \K(f)+\K(b|f)$. Equation~(\ref{eq:mainAlgStat5}) follows from the inequality $\K(f|\mathcal{H})\lea \K(f|b,\mathcal{H})+\K(b|\mathcal{H})$. 
	\begin{align}
	\label{eq:mainAlgStat3}
	\K(a)+f(a) &\lel s +\K(b) +\K(f|b) -\K(f|b,\mathcal{H}),\\
	\label{eq:mainAlgStat4}
	\K(a)+f(a) &\lel s +\K(f) +\K(b|f) -\K(f|b,\mathcal{H}),\\
	\label{eq:mainAlgStat5}
	\K(a)+f(a) &\lel s +\K(f) +\K(b|f) -\K(f|\mathcal{H})+\K(b|\mathcal{H}),\\
	\label{eq:mainAlgStat6}
	\K(a)+f(a) &\lel s +\I(f;\mathcal{H}) +(\K(b|f)+\K(b|\mathcal{H})).
	\end{align}
	The remaining part of the proof shows that $\K(b|f)+\K(b|\mathcal{H}) = O(\log (s+\K(b)))$. This is sufficient to proof the theorem due to its logarithmic precision and by the right hand side of the inequality of Equation~(\ref{eq:mainAlgStat3}) being larger than $s+\K(b)$ (up to a logarithmic factor). Since $b$ is a prefix of border, due to proposition~(\ref{prp:borderprefix}), one gets that $\K(b|\mathcal{H})< O(\K(\|b\|))< O(\log \|b\|) < O(\log \K(b))$. Thus combined with Equation~(\ref{eq:mainAlgStat6}) and also Equation~(\ref{eq:mainAlgStat3}), one gets
	\begin{align}
	\K(a)+f(a) &\lel s +\I(f;\mathcal{H}) +\K(b|f).
	\end{align}
	We now prove $\K(b|f)\,{\lea}\,\K(s,\|b\|)$. This follows from the existence of an algorithm, that when given $f$, $s$, and $\|b\|$,  computes $S(b')$ for all $b'\in\{0,1\}^{\|b\|}$ ordered by $\lhd$, and then outputs the first $b'$ such that $S(b')\,{<}\,s$. This output is $b$ otherwise there exists total $b'\,{\lhd}\, b$, with $\|b'\|\,{=}\,\|b\|$, and  $S(b')\,{<}\,s$. This implies the existence of total string ${b'}^-$ such that $S({b'}^-)\,{<}\,s$. This contradicts the definition of $b$ being the shortest total string with $S(b)\,{<}\,s$. So  $\K(b|f)\,{\lea}\,\K(s,\|b\|)$ and thus one gets the final form of the theorem, as shown below. Equation~(\ref{eq:mainAlgStat7}) is again due to the right hand side of Equation~(\ref{eq:mainAlgStat3}).
	\begin{align}
	\nonumber
	\K(a)+f(a) &\lel s +\I(f;\mathcal{H})+\K(s,\|b\|),\\
	\K(a)+f(a) &\lel s +\I(f;\mathcal{H}),\label{eq:mainAlgStat7}\\
	\nonumber
	\min_{a\in\dom(f)}\K(a)+f(a) &\lel -\log \sum_{a\in\dom(f)}\m(a)2^{-f(a)} +\I(f;\mathcal{H}).\end{align}
\end{proof}

\begin{proposition}
	\label{prp:nongrowthStnd}
	For all $\alpha\in\IS$, $x\in\FS$, $\I(x\,{;}\,\mathcal{H})\lea\I(\alpha\,{:}\,\mathcal{H})+\K(x|\alpha)$.
\end{proposition}

\section{Quantum}
\label{sec:quantum}
For a complex number $c=a\,{+}\,b\i$, its conjugate is represented by $c^*=a\,{-}\,b\i$, and we say that $|c|^2 = c^*c = a^2+b^2$. A complex number $c$ is elementary if its real and imaginary components are rational, and its encoding is denoted by $\langle c\rangle = \langle a,b\rangle$. We deal with a finite number $n$ qubits, and we use the Hilbert space $\mathbb{H}_n=(\mathbb{C}^2)^{\otimes n}$. This is a complex vector space of dimension $2^n$, where the scalar field is the set of complex numbers, $\C$. Any vector is a linear combination of the orthonormal basis set $\{\ket{\beta_i}\}$. Pure quantum states $\kpsi$ are represented as unit vectors in this space, with $\kpsi = \sum_i c_i\ket{\beta_i}$ and $\sum_i |c_i|^2=1$. For a pure state $\ktheta$ of $m\,{\leq}\,n$ qubits, $\ket{\theta0{..}}$ is used to denote $\ket{\theta}\ket{0^{n-m}}$. The bra $\bra{\psi}$, is a unit vector of the dual space of $\mathbb{H}_n$.

Mixed quantum states are represented as a probability distribution $\{p_i\}$ over a set of pure states, $\{\ket{\psi_i}\}$, where $\sum_i p_i = 1$ and for each $i$, $p_i\geq 0$. A Hermitian matrix $A$, is a square $2^n\times 2^n$ matrix that is equal to its own conjugate transpose,  where $a_{ij} = a^*_{ji}$. Unless otherwise stated, the dimenions of the matrices in this section are $2^n\times 2^n$. The conjugate transpose of a matrix is denoted by $A^*$.  The eigenvalues of a Hermitian matrix are always real. A Hermitian matrix $A$ is called nonnegative, $A\succeq 0$, if all its eigenvalues are nonnegative. 

For Hermitian matrices $A$ and $B$, we say $A\succeq B$ iff $(A-B)\succeq 0$. For functions $f$ whose range are Hermitian matrices, we use ${\lem}f$ and ${\gem f}$ to denote $\preceq f/O(1)$ and $\succeq f/O(1)$. We use ${\eqm f}$ to denote ${\lem}f$ and ${\gem f}$. 

The trace of a matrix $A$ is represented as $\tr A = \sum_i \bra{\beta_i}A\ket{\beta_i}$ where $\{\ket{\beta_i}\}$ are the basis vectors. Density matrices are nonnegative Hermitian matrices of trace equal to $1$.  Pure states $\kpsi$ have a dual representation as density matrices $\ket{\psi}\bra{\psi}$. Mixed states $\{p_i\}$ over a set of pure states $\{\ket{\psi_i}\}$ have a dual representation as density matrices, as $\sum_i p_i\ket{\psi_i}\bra{\psi_i}$.  

A semi density matrix is a nonnegative Hermitian matrix of trace less than or equal to $1$. Pure states are represented using the notation $\kpsi$, and (semi)-density matrices are represented using the notation $\rho$. 
A pure state $\kpsi=\sum_i c_i\ket{\beta_i}$ of $n$ qubits is elementary, if all its coefficients $c_i$ are elementary. Such elementary states admit a natural encoding $\langle\kpsi\rangle = \langle\{c_1,\dots,c_{2^n}\}\rangle$. Let $\mathcal{P}_n$ be the set of $n$-qubit pure quantum states. For such elementary states, their algorithmic probability is $\m(\ket{\psi})=\m(\langle \kpsi\rangle)$, and their Kolmogorov complexity is $\K(\ket{\psi})=\K(\langle\ket{\psi}\rangle)$.

The identity matrix is denoted by $I$. A matrix $V$ is elementary iff all its entries $v_{ij}$ are elementary. Elementary matrices admit a natural encoding, $\langle V\rangle = \langle \{v_{ij}\,{:}\,i,j\;{\in}\;[1,2^n]\}\rangle$. Let $\mathcal{M}_n$ be the set of elementary semi-density matrices of dimension $2^n$. For $\rho\in\mathcal{M}_n$, $\m(\rho)=\m(\langle\rho\rangle)$.

Transformations between states occur due to unitary matrices $V$ of dimension $2^n$. A matrix $V$ is unitary iff $V^*V=VV^*=I$. A semi-density matrix $\rho$ is lower computable, if there is a program $p$ to the universal Turing machine $U$ that outputs an infinite sequence $\{\langle \rho_i\rangle\}$ of encoded elementary semi-density matrices where $\rho_{i+1}\succeq \rho_i$ and $\lim_{i\rightarrow\infty}\rho_i=\rho$. We say that $p$ lower computes $\rho$.

The universal lower computable semi-density matrix [\citealt*{Gacs01}] is $\bmu$, where if $p$ lower computes $n$ qubit semi-density matrix $\rho$, then $\bmu\gem\m(p|n)\rho$. Furthermore, $\bmu\eqm \sum_{\rho\in\mathcal{M}_n}\m(\rho|n)\rho\eqm\sum_{\ket{\psi}\in\mathcal{P}_n}\m(\ket{\psi}|n)\ket{\psi}\bra{\psi}$.

We now describe an (infinite) encoding scheme for an arbitrary (not necessarily elementary) quantum pure state $\kpsi$. This scheme is defined as an injection between the set of pure states and $\IS$.  We define $
\langle\kpsi\rangle$ to be an ordered list of the encoded tuples $\langle \langle\ktheta\rangle,q,[|\braket{\psi}{\theta}|^2\,{\geq}\,q]\rangle$, over all elementary states $\ktheta$ and rational distances $q\,{\in}\,\Q_{>0}$. For (semi)density matrix $\sigma$, with eigenvectors $\ket{\theta_i}$ and eigenvalues $\delta_i$, we have  
$\langle\sigma\rangle=\langle \langle\ket{\theta_1}\rangle,\langle\delta_1\rangle,\dots,\langle\ket{\theta_n}\rangle,\langle\delta_n\rangle\rangle$.

\section{Vit\'{a}nyi and G\'{a}cs  Complexities}

The G\'{a}cs entropy of a mixed state $\sigma$, is denoted to be $\Hg(\sigma)=\ceil{-\log\Tr\bmu\sigma}$, introduced in [\citealt*{Gacs01}]. The G\'{a}cs entropy of a pure state $\ket{\psi}$ is $\Hg(\ket{\psi})=\Hg(\ket{\psi}\bra{\psi})$. The Vit\'{a}nyi entropy of a $n$ qubit pure state $\ket{\psi}$ is the classical encoding of an approximate pure state plus the fidelity error term with the original state, with $\Hv(\ket{\psi})= \min_{\ktheta\in\mathcal{P}_n}\K(\ktheta|n)-\log |\braket{\psi}{\theta}|^2$, introduced in [\citealt*{Vitanyi99}]. Theorem 8 of [\citealt*{Gacs01}] establishes that $\Hg(\ket{\psi})\lea\Hv(\ket{\psi})\lel 4\Hg(\ket{\psi})$. The following lemma is used to prove Theorem \ref{thr:hvhg}, which provides another bound between $\Hv$ and $\Hg$. This bound shows that quantum states $\ket{\psi}$ that have $\Hv(\ket{\psi}) \gg \Hg(\ket{\psi})$ have large mutual information with the halting sequence $\ch$. 

\begin{lemma}
\label{lmm:qct} For pure quantum state $\ket{\psi}$, $ $\\$ \min_{\ket{\phi}}\K(\ket{\phi})-\log|\braket{\psi}{\phi}|^2 \lel -\log\sum_{\ket{\phi}}\m(\ket{\phi})|\braket{\psi}{\phi}|^2+\I(\langle\kpsi\rangle:\mathcal{H})$.
\end{lemma}

\begin{proof}
	Let $\mathcal{D}$ be a finite set of elementary pure states, computable from $\langle\kpsi\rangle$ and the value $g=\ceil{-\log\sum_{\ket{\phi}}\m(\ket{\phi})|\braket{\psi}{\phi}|^2}$ such that $-\log\sum_{\ktheta\in \mathcal{D}}\m(\ktheta)|\braket{\psi}{\theta}|^2\,{\leq}\, g\,{+}\,1$.  It is computable because there exists an algorithm that can find $\mathcal{D}$ by the following method. The algorithm enumerates  all elementary states $\ktheta$. This algorithm approximates the algorithmic probabilities $\m(\ktheta)$ (from below) with $\widehat{\mathbf{m}}(\ktheta)$. This algorithm uses $\langle\kpsi\rangle$ to approximate $|\braket{\theta}{\psi}|^2$ from below with $\widehat{|\braket{\theta}{\psi}|}^2$. This algorithm stops when it finds a finite set $\mathcal{D}$ such that $-\log \sum_{\ktheta\in\mathcal{D}}\widehat{\mathbf{m}}(\ktheta)\widehat{|\braket{\theta}{\psi}|}^2\leq g+1$. Thus we have that $\K(\mathcal{D}|g,\langle\kpsi\rangle)=O(1)$. Let $f\,{:}\,\mathcal{D}\,{\rightarrow}\,\W$ be a elementary function such that $|{-}\log |\braket{\psi}{\theta}|^2-f(\ktheta)|\leq 1$. One such $f$ is computable relative to $\langle\kpsi\rangle$, and $g$. Firstly this is because $D$ is computable from $\langle\kpsi\rangle$ and $g$. The individual values of $f$ are computable from $\langle\kpsi\rangle$, since $|\braket{\psi}{\theta}|^2$ can be computed to any degree of accuracy. So  $\K(f|g,\langle\kpsi\rangle)=O(1)$ and $-\log\sum_{\ktheta\in \mathcal{D}}\m(\ktheta)2^{-f(\ktheta)} \leq g\,{+}\,2$. One then has that
	\begin{align}
	\nonumber
	\min_{\ket{\phi}}\K(\ket{\phi})-\log|\braket{\psi}{\phi}|^2 &\lea \min_{\theta\in \mathcal{D}}\K(\ktheta)+f(\ktheta)\\
	\label{eqn:sq2}
	&\lel -\log\sum_{\ktheta\in  \mathcal{D}}\m(\ktheta)2^{-f(\ktheta)} +\I(\langle f\rangle;\mathcal{H}).\\
	\label{eqn:sq3}
	&\lel g+\I(\langle f\rangle;\mathcal{H})\\
	\label{eqn:sq4}
	&\lel g+\I(\langle\kpsi\rangle\,{:}\,\mathcal{H})+\K(\langle f\rangle|\langle\kpsi\rangle)\\
	\nonumber
	&\lel g+\I(\langle\kpsi\rangle\,{:}\,\mathcal{H})+\K(g)\\
	\nonumber
	&\lel -\log\sum_{\ket{\phi}}\m(\ket{\phi})|\braket{\psi}{\phi}|^2+\I(\langle\kpsi\rangle\,{:}\,\mathcal{H}).
	\end{align}
	Inequality~\ref{eqn:sq2} is due to Theorem~\ref{thr:MainTheoremAlgStats}. Inequality~\ref{eqn:sq3} is due to the definition of $f$ and $ \mathcal{D}$. Inequality~\ref{eqn:sq4} is due to Proposition~\ref{prp:nongrowthStnd}.
\end{proof}

\begin{theorem}
	\label{thr:hvhg}
	$\Hg(\ket{\psi})\lea\Hv(\ket{\psi})\lel\Hg(\ket{\psi})+\I(\ket{\psi}:\ch|n)$.
\end{theorem}
\begin{proof}
	This follows directly from Lemma \ref{lmm:qct}, relativized to $n$, and the fact that $\Hg(\ket{\psi})\eqa-\log \Tr\bmu\ket{\psi}\bra{\psi}\eqa -\log\sum_{\ket{\phi}}\m(\ket{\phi}|n)|\braket{\phi}{\psi}|^2$.
\end{proof}

\section{Quantum Kolmogorov Complexity}
Quantum Kolmogorov complexity, defined in [\citealt*{BerthiaumeVaLa01}], uses a universal quantum Turing machine to define the complexity of a pure quantum state. The input and output tape of this machine consists of symbols of the type 0, 1, and $\#$. The input is an ensemble $\{p_i\}$ of pure states $\ket{\psi_i}$ of the same length $n$, where $p_i\geq 0$ and $\sum_ip_i=1$. This ensemble can be represented as a mixed state of $n$ qubits. If, during the operation of the quantum Turing machine, all computational branches halt at a time $t$, then the contents on the output tape are considered the output of the quantum Turing machine. The quantum Kolmogorov complexity of a pure state, $\QC[\epsilon](\ket{\psi})$ is the size of the smallest (possibly mixed state) input to a universal quantum Turing machine such that fidelity between the output and $\ket{\psi}$ is at least $\epsilon$. The fidelity between a mixed state output $\sigma$ and $\ket{\psi}$ is $\bra{\psi}\sigma\ket{\psi}$. We require that universal quantum Turing machine be conditioned on the number of qubits $n$, on a classical auxillary tape.

Theorem \ref{thm:qclower} proves that quantum states with low quantum Kolmogorov complexity have low G\'{a}cs entropy. It is related to Theorem 9 in~[\citealt*{Gacs01}] which bounds the larger quantity $\bra{\psi}-\log \bmu\ket{\psi}$ in terms of $\QC[\epsilon](\ket{\psi})$ for $\epsilon>1/2$, but with a weaker error $2(1-\epsilon)n$.
\begin{theorem}
	For pure n-qubit quantum state $\ket{\psi}$, $\Hg(\ket{\psi})\lea \QC[\epsilon](\ket{\psi})+\K(\QC[\epsilon](\ket{\psi})|n)-\log \epsilon$.
	\label{thm:qclower}
\end{theorem}

\begin{proof}
	For each $k$ and $t$ in $\N$, let $\mathcal{H}_{k,t}$ be the smallest linear subspace spanning elementary $k$-qubit inputs to the universal quantum Turing machine $M$ of size $k$ that halt in $t$ steps, outputing an $n$ qubit mixed state. As shown in [\citealt*{Muller08}], if $t\neq t'$, then $\mathcal{H}_{k,t}$ is perpendicular to $\mathcal{H}_{k,t'}$. Let $P_{k,t}$ be the projection onto $\mathcal{H}_{k,t}$. For each $k$ and $t$, the universal quantum Turing machine defines a completely positive map $\:\Psi_{k,t}$ over $\mathcal{H}_{k,t}$, where $\Psi_{k,t}(\nu)=\sigma$ implies that the quantum Turing machine, with semi-density matrix $\nu$ of length $k$ on the input tape will output the $n$ qubit semi-density matrix $\sigma$ and halt in time $t$. Let $\rho$ be a $k$ qubit mixed state that minimizes $k=\QC[\epsilon](\ket{\psi})$ in time $t$. 
	\begin{align*}
	\rho&\leq P_{k,t}\\
	2^{-k}\rho&\leq 2^{-k}P_{k,t}\\
	\Psi_{k,t}2^{-k}\rho&\leq \Psi_{k,t}2^{-k}P_{k,t}\\
	\Psi_{k,t}2^{-k}\rho&\leq \sum_t\Psi_{k,t}2^{-k}P_{k,t}
	\end{align*}
	The semi density matrix  $\sum_t\Psi_{k,t}2^{-k}P_{k,t}$ is lower computable relative to $k$ and $n$, so
	\begin{align*}
	\m(k|n)2^{-k}\Psi_{k,t}\rho&\leq \m(k|n)\sum_t\Psi_{k,t}2^{-k}P_{k,t}\lem \bmu\\
	\m(k|n)2^{-k}\bra{\psi}\Psi_{k,t}(\rho)\ket{\psi}&\lem \bra{\psi}\bmu\ket{\psi}\\
	k+\K(k|n)-\log \epsilon&\gea \Hg(\ket{\psi}).
	\end{align*}
\end{proof}

\begin{lemma}
	\label{thr:mixedstatecoding}
	For each density matrix $\sigma$ with $k=\ceil{-\log\sum_{\rho\in\mathcal{M}_n}\m(\rho|n)\Tr\rho\sigma}$, there is an elementary density matrix $\rho$ such that
	\begin{enumerate}
		\item $\K(\rho|n)-\log\Tr\rho\sigma \lel k$,
		\item $\K(\rho|n)\lel \I(\langle\sigma\rangle:\mathcal{H}|n)+O(\log k)$.
	\end{enumerate}
\end{lemma}

\begin{proof}
	By definition of $\bmu$, we have that $k\eqa-\log\Tr\bmu\sigma$. For a given sequence $x$, we define the following semi-density matrix $\nu[x]=\sum \{2^{-\|y\|}\ket{\phi}\bra{\phi}\,{:}\,x\sqsubseteq y, U_n(y)=\ket{\phi}\textrm{ has $n$ qubits}\}$. Thus $\Tr \nu[x]\leq 2^{-\|x\|}$. For $i\in\mathbb{N}$,  if $\mathcal{B}[i]=0$, let $\nu[i]$ be a 0 matrix, otherwise if $\mathcal{B}[i]=1$, then $\nu[i] = \nu[\mathcal{B}[0..i-1]0]$. This definition makes sense because $\mathcal{B}[0..i{-}1]0$ is total whenever $\mathcal{B}[i]=1$. A visual description of $\nu[i]$ can be seen in Figure \ref{fig:vi}.
	
	\begin{figure}[h!]
		\begin{center}
			\includegraphics[width=0.7\columnwidth]{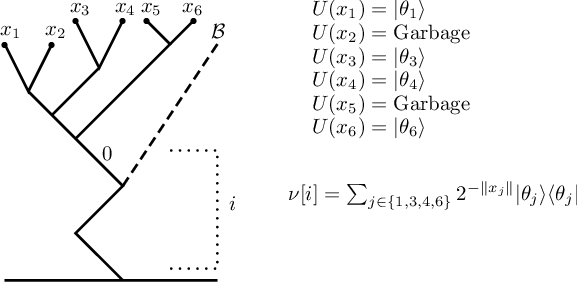}
			\caption{The above figure shows an example $\nu[i]$, where $\mathcal{B}[i]=1$. Two extensions of $\mathcal{B}[0..i-1]0$ do not produce elementary pure states when given as input to the universal Turing machine $U$. They are $x_2$ and $x_5$. The semi density matrix $\nu[i]$ has $\Tr \nu[i]\leq 2^{-i}$ and is the weighted sum of elementary density matrices $\ket{\theta_i}\bra{\theta_i}$.}
			\label{fig:vi}
		\end{center}
	\end{figure}

So $2^{-k}\eqm\Tr\bmu\sigma\eqm\sum_{i=1}^\infty \Tr \nu[i]\sigma$. Since the trace of $\nu[i]$ is the  weighted sum of a finite number of prefix free extensions of a string of length $i$, (or 0), $\Tr\nu[i]\leq 2^{-i}$. So $2^{-k}\eqm \sum_{i=1}^{k+1}\Tr\nu[i]\sigma+\sum_{i=k+2}^\infty\Tr\nu[i]\sigma \lem \sum_{i=1}^{k+1}\Tr\nu[i]\sigma+2^{-k-1}$. So $\sum_{i=1}^{k+1}\Tr\nu[i]\sigma\gem 2^{-k-1}$. So there is an $j$, $1\leq j\leq k+1$ such that $\Tr\nu[j]\sigma> c2^{-k-1}/(k+1)$, where $c$ is a positive rational constant solely dependent on the universal Turing machine $U$.  Let $j$ be the smallest index where this occurs. 
	
We define $\rho$ to be equal to $\nu[j]/\Tr\nu[j]$. $\rho$ is an elementary density matrix, which is computable from the first $j-1$ bits of the border sequence, with $\K(\rho|n)\lea \K(\mathcal{B}[0..j-1]|n) \lea j+2\log k$.  Since $\Tr\nu[j]\leq 2^{-j}$, we have that $-\log \Tr\rho\sigma \lea k-j+\log k$. So $\K(\rho)-\log\Tr\rho\sigma\lea k+3\log k$, proving (1).
	
It remains to prove that $\K(\mathcal{B}[0..j-1]|n) \lel \I(\langle \sigma\rangle:\mathcal{H}|n)+\K(k|n)$. By the definition of $\I$, for 
\begin{align*}
\I(\mathcal{H}:\langle\sigma\rangle|n)&=\log \sum_{x,y\in \FS}2^{\I(x:y|n)}\m(x|\mathcal{H},n)\m(y|\langle\sigma\rangle,n). \\
&>\log \sum_{x\in \FS}2^{\I(x:x|n)}\m(x|\mathcal{H},n)\m(x|\langle\sigma\rangle,n)\\
&\gea\log 2^{\K(\mathcal{B}[0..j-1]|n)}\m(\mathcal{B}[0..j-1]|\mathcal{H},n)\m(\mathcal{B}[0..j-1]|\langle\sigma\rangle,n)\\
&\eqa \K(\mathcal{B}[0..j-1]|n) - \K(\mathcal{B}[0..j-1]\,{|}\,\langle \sigma\rangle,n) - \K(\mathcal{B}[0..j-1]\,{|}\,\mathcal{H},n).
\end{align*}
Since the border sequence is computable from the halting sequence $\K(\mathcal{B}[0..j-1]\,{|}\,\mathcal{H}) \lea \K(j)<O(\log k)$, therefore it is sufficient to prove $\K(\mathcal{B}[0..j-1]\,{|}\,\langle\sigma\rangle,n)\lea\K(k,j|n)<O(\log k)$.
	
Given $k$, $j$, $\langle \sigma\rangle$, and $n$, there is a program that can enumerate $\nu[x]$ over all total strings $x$ of length $j$. This program also uses $\langle \sigma\rangle$ to compute from below $\Tr \nu[x]\sigma$, until it finds a string $y$ such that $\Tr\nu[y]\sigma> c2^{-k-1}/(k+1)$. Note that there is a unique string $y$ with this property and $y=\mathcal{B}[0..j-1]0$. Otherwise $y=\mathcal{B}[0..h-1]0z$, for some $h<j$, $\|h\|+\|z\|=j$, with $\mathcal{B}[h]=1$. However, then $\Tr \nu[h] > c2^{-k-1}/(k+1)$, causing a contradiction for the definition of $j$. Thus the program can find $\mathcal{B}[0..j-1]$, completing the proof for (2).
\end{proof}

Theorem \ref{thr:qchg} proves that quantum states with very low G\'{a}cs complexity, $\Hg$, will be easy to approximate with a small input to a universal quantum Turing machine. This addresses open question (1) of \citealt*{Gacs01}. Let $\mathcal{Q}_n$ be the set of $n$-qubit pure states. For a sequence $\ket{\psi_n}\in\mathcal{Q}_n$ of quantum states where $\Hg(\ket{\psi_n})=O(1)$, the following theorem shows that $\QC[\Omega(1)](\ket{\psi_n})=O(1)$.\newpage

\begin{theorem}
	\label{thr:qchg}
	$\QC[\bra{\psi}\bmu\ket{\psi}/\Hg(\ket{\psi})^{O(1)}](\ket{\psi})\lel \Hg(\ket{\psi})$.
\end{theorem}
\begin{proof}
	We use reasoning from Theorem 7 in \citealt*{Gacs01}. From Lemma\begin{flushright}
		Theorem
	\end{flushright} \ref{thr:mixedstatecoding}, there exists a $\rho$ such that $\K(\rho|n)-\log\bra{\psi}\rho\ket{\psi}\lel \Hg(\ket{\psi})$. Let $\ceil{-\log \bra{\psi}\rho\ket{\psi}}=m$. Let $\ket{u_1},\ket{u_2},\ket{u_3},\dots$ be the eigenvectors of $\rho$ with corresponding eigenvalues $\gamma_1\geq \gamma_2 \geq \gamma_3\dots$ For $y\in\N$, let $\rho_y=\sum_{i=1}^y\gamma_i\ket{u_i}\bra{u_i}$. We expand $\ket{\psi}$ in the basis of $\{\ket{u_i}\}$ with $\ket{\psi}=\sum_i c_i\ket{u_i}$. So we have that $\sum_{i}\gamma_i|c_{i}|^2\geq 2^{-m}$. Let $s\in\N$ be the first index $i$ with $\gamma_i<2^{-m-1}$. Since $\sum_i \gamma_i\leq 1$, it must be that $s\leq 2^{m+2}$. So
	\begin{align*}
	\sum_{i\geq s}\gamma_i|c_{i}|^2&<2^{-m-1}\sum_{i}|c_{i}|^{2}\leq 2^{-m-1}, \\
	\bra{\psi}\rho_{2^{m+2}}\ket{\psi}\geq\Tr \bra{\psi}\rho_s\ket{\psi} &> \sum_{i< s}\gamma_i|c_{i}|^2 \geq 2^{-m}-\sum_{i\geq s}u_i|c_{i}|^2 > 2^{-m-1}.
	\end{align*}
	We now describe a program to the universal quantum Turing machine (with auxilliary tape containing $\langle n\rangle$) that will construct $\rho_{2^{m+2}}$. The input is an ensemble $\{\gamma_i\}_{i=1}^{2^{m+2}}$ of vectors $\{\ket{cB(i)}\}$, where $B(i)$ is the binary encoding of index $i\in\N$ which is of length $m+2$. Helper code $c$ of size $\eqa\K(p|n)$ transforms each $\ket{cB(i)}$ into $\ket{u_i}$. Thus the size of the input is $\lea \K(p|n)+m\lel\Hg(\ket{\psi})$. The fidelity of the approximation is $\bra{\psi}\rho_{2^{m+2}}\ket{\psi} > 2^{-m-1} \geq 2^{-\Hg(\ket{\psi})-O(\log\Hg(\ket{\psi}))}\geq \bra{\psi}\bmu\ket{\psi}/\Hg(\ket{\psi})^{O(1)}$.
\end{proof}

%\bibliographystyle{alpha} 
%\bibliography{refnotes}

\end{document}